\documentclass[aps,pra,reprint,twocolumn,nofootinbib,superscriptaddress,floatfix]{revtex4-2}

\usepackage[english]{babel}
\usepackage[utf8]{inputenc}
\usepackage[normalem]{ulem}
\usepackage{amsmath,amssymb,amsfonts,amsthm,physics,graphicx,xcolor,mathtools,dsfont,bm,hyperref}
\usepackage{quantikz}
\hypersetup{
	colorlinks=true,
	linkcolor={red!40!black},
	citecolor={blue!60!black},
	urlcolor={blue!50!black}
	}
\graphicspath{{fig/}}
\definecolor{myred}{rgb}{0.85,0,0}
\definecolor{mygray}{rgb}{0.87, 0.87, 0.87}

\let\oldbibitem\bibitem 
\renewcommand{\bibitem}{
    \renewcommand{\doi}[1]{\texttt{\href{https://doi.org/##1}{doi:##1}}} 
    \let\bibitem\oldbibitem 
    \oldbibitem 
}

\newcommand{\GG}{\mathcal{G}}
\newcommand{\VV}{\mathcal{V}}
\newcommand{\EE}{\mathcal{E}}

\newtheorem{lemma}{Lemma}
\newtheorem{proposition}{Proposition}

\begin{document}

\title{Learning structural balance of graphs from quantum spectral features}
\author{Stefano Scali}
\email{scali.stefano@gmail.com}
\affiliation{Department of Physics and Astronomy, University of Exeter, Exeter EX4 4QL, UK}
\author{Oleksandr Kyriienko}
\affiliation{School of Mathematical and Physical Sciences, University of Sheffield, Sheffield S10 2TN, UK}

\begin{abstract}
We develop a quantum approach to spectral feature extraction from the density of states (DOS) of a problem-dependent Hamiltonian, and apply it to machine learning on signed graphs. We propose to embed a signed graph as an Ising model instance with positive and negative interactions, and use the standardized moments of the Ising DOS as features for learning. We show that these moments count signed closed walks, are switching-invariant, and are size-free by construction. As a benchmark, we target learning the frustration index, an NP-hard measure of structural balance that can be labeled exactly at moderate size. At zero field, the models can be sampled classically, allowing the quantum extraction procedure to be certified against exact ground truth. We propose DOS-QPE, a phase estimation on a purified maximally mixed probe, which samples the spectral density with orders of magnitude fewer shots than Hadamard test-based trace sampling and feeds the resulting features directly into classically trained models. On $1.4\times10^5$ labeled graphs the exact DOS determines the frustration index, and five moments recover it with a mean error of $0.4$, well below one sign flip. Beyond zero field, the underlying trace-estimation problem is DQC1-complete, providing access to spectral features for which no efficient classical sampling method is known. Our work opens routes towards quantum applications in social network balance analysis, spin-glass studies, correlation clustering, and protein-interaction networks.
\end{abstract}

\maketitle


\section{Introduction}

Machine learning models for graphs rely on informative features, and spectral features are among the most effective. The spectra of adjacency and Laplacian operators summarize connectivity in a permutation-invariant way and underpin graph kernels and spectral embeddings~\cite{Kriege_2020}. Spectral convolutions built on the graph Laplacian power graph neural networks~\cite{Defferrard_2016, Kipf_2017}, leading modularity eigenvectors enable community detection~\cite{Newman_2006, Umeano_2024_deteqt}, and combinatorial Laplacian spectra encode Betti numbers in topological data analysis~\cite{Carlsson_2009, Scali_2024, Scali_2024_thermal}. A recurring object is the density of states (DOS), a global, basis-invariant spectral summary that quantum protocols can estimate directly without diagonalization~\cite{Scali_2024, Scali_2025_dosqpe}.
\begin{figure*}[t]
    \includegraphics[width=\textwidth]{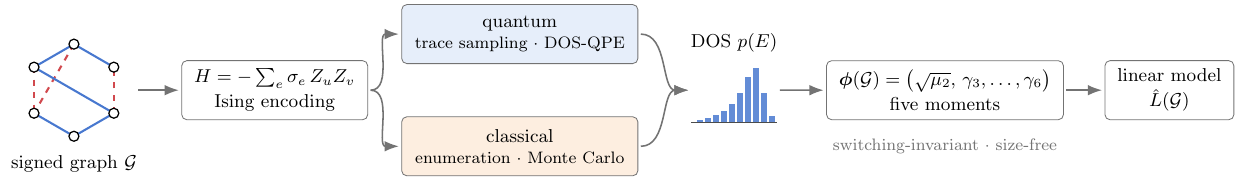}
    \caption{The DOS feature pipeline for signed graphs. A signed graph $\GG$ (positive edges solid, negative edges dashed) is encoded in its native Ising Hamiltonian, Eq.~\eqref{eq:H}. The density of states of $H$ is sampled by either arm of the pipeline: classically, by exact enumeration or Monte Carlo over uniform vertex 2-colorings (Proposition~\ref{prop:sampling}), or on quantum hardware, by trace sampling or by the DOS-QPE circuit of Fig.~\ref{fig:circuit}. Five spectral moments of the sampled density form the feature vector $\bm{\phi}(\GG)$ of Eq.~\eqref{eq:features}, which is switching-invariant and size-free by construction and feeds a classically trained model; we benchmark on the frustration index $L(\GG)$, for which exact labels are available.}
    \label{fig:pipeline}
\end{figure*}

One direction that has received much less attention is that of signed graphs. Here every edge carries a positive or negative label, describing systems built on antagonistic pairwise relations: alliances and rivalries in social networks~\cite{Harary_1953, Read_1954, Sampson_1968}, correlation structures of financial assets~\cite{Sharma_2025}, and quenched couplings of spin glasses~\cite{Anderson_1970, Barahona_1982, Altieri_2024}. A signed graph $\GG$ is \emph{balanced} when every cycle contains an even number of negative edges, and the \emph{frustration index} $L(\GG)$ measures the distance from balance: the minimum number of edge signs one has to flip to reach it. Computing $L$ is NP-hard~\cite{Barahona_1982, Aref_2019}, since it generalizes MAXCUT, which one recovers when all edges are negative. Integer programming solves moderate instances exactly~\cite{Aref_2020}, and signed-graph problems have recently drawn attention from the quantum side~\cite{segawa2021, chen2023quantumcomplexity, Incudini_2024, Kordonowy_2026}. However, spectral feature sets designed for signed graphs remain scarce~\cite{Hameed_2025}, even though the sign structure is precisely where quantities like structural balance reside.

Here, we develop a DOS-based quantum feature extraction and learning approach for signed graphs [Fig.~\ref{fig:pipeline}]. The signed graph is embedded as an Ising Hamiltonian, closely related to Hamiltonian-based encodings for other graph problems~\cite{Hen_2012, Gaitan_2014}. The relevant features are represented by the standardized moments of the Hamiltonian DOS. This construction matches the symmetry of the problem. The moments count signed closed walks through an exact combinatorial identity and are invariant under switching, the gauge transformation that preserves the frustration index. We benchmark the developed approach on the frustration index itself. Although NP-hard in the worst case, it can be labeled exactly at moderate sizes, allowing every prediction to be scored against ground truth. Across a dataset of $1.4\times10^5$ exactly labeled graphs, the exact DOS \emph{determines} $L$, with no two graphs sharing a density of states while differing in frustration index. A regressor using five moments recovers $L$ with a mean absolute error below $0.4$ sign flips at $n=12$ vertices across seventeen classes, while edge and triangle counting incurs twice the error on the same splits. The standardized moments are size-free, allowing the model to transfer across graph sizes and from synthetic ensembles to real-world signed networks.

We highlight that the required features can be extracted natively on quantum hardware, providing access to the DOS without diagonalization. To this end, we simulate two quantum feature extraction routes. The first is a near-term approach that samples the trace of time evolution through Hadamard tests~\cite{Scali_2024}. The second, introduced here as DOS-QPE, targets early fault-tolerant devices and applies quantum phase estimation (QPE) to a purified maximally mixed probe, so that each shot samples directly from the spectral density. Ref.~\cite{Scali_2025_dosqpe} formalizes the corresponding estimation theory and extends the probe beyond maximal mixing. We show that DOS-QPE reaches the noiseless-feature ceiling with orders of magnitude fewer shots than trace sampling and allows its output to be used directly by classically trained models without retraining. At zero field, the pipeline can also be simulated classically. We prove that the DOS is samplable with one edge-list pass per draw (Proposition~\ref{prop:sampling}), allowing feature generation at arbitrary graph size while certifying the quantum routes against exact ground truth. For general dynamics, estimation of the normalized trace of the evolution is complete for the one-clean-qubit class (DQC1)~\cite{Knill_1998, Shor_2008} and is therefore believed to be classically intractable, while the same feature extraction remains available on quantum hardware. We also show that the choice of encoding is essential: the line-graph (spin-ice) alternative is switching-blind, with a spectrum that carries no sign information (Appendix~\ref{app:lemma}).


\section{Spectral features from the density of states}
\label{sec:dos}

The density of states (DOS) of a Hamiltonian $H$ acting on a Hilbert space of dimension $D$, with spectral decomposition $H = \sum_{i=1}^{D} \omega_i |i\rangle\langle i|$, is $S(\omega) = D^{-1}\sum_{i=1}^{D} \delta(\omega - \omega_i)$, normalized here to represent a probability density. Its Fourier transform is the normalized trace of the time evolution,
\begin{equation}
    s(t) = \frac{1}{D}\tr U(t) = \int \! d\omega\, S(\omega)\, e^{-i\omega t},
    \qquad U(t) = e^{-iHt},
    \label{eq:trace_signal}
\end{equation}
so estimating $s(t)$ over time and inverting the transform recovers $S(\omega)$. This forms the basis of the near-term pipeline described in Sec.~\ref{sec:quantum}.

We summarize the DOS through its moments, distinguishing two families. The \emph{energy} moments are the raw spectral averages
\begin{equation}
    M_k = \tr(H^k)/D,
    \label{eq:Mk}
\end{equation}
which carry the combinatorial content. For a Hamiltonian supported on a graph, they expand into counts of closed walks (Sec.~\ref{sec:walks}). For learning features, we instead map the spectrum onto the unit interval by an affine rescaling $\omega \mapsto x$ and use the central moments $\mu_k = \sum_i p_i (x_i - \bar{x})^k$ of the resulting discrete distribution $\{(x_i, p_i)\}$, where $p_i$ is the DOS weight of level $i$ and $\bar{x} = \sum_i p_i x_i$. We define the standardized moments as $\gamma_k = \mu_k/\mu_2^{k/2}$, of which $\gamma_3$ and $\gamma_4$ are the skewness and kurtosis. Given a graph $\GG$ and its Hamiltonian encoding, our feature vector is
\begin{equation}
    \bm{\phi}(\GG) = \left( \sqrt{\mu_2},\, \gamma_3,\, \gamma_4,\, \gamma_5,\, \gamma_6 \right),
    \label{eq:features}
\end{equation}
providing five numbers per graph that capture the width of the rescaled spectrum and its four leading shape parameters. The standardized moments are dimensionless, while rescaling the spectral support makes the full feature set size-free by construction and comparable across graphs of different sizes.

The pipeline, summarized in Fig.~\ref{fig:pipeline}, has four stages: (i) encode the graph in a Hamiltonian whose spectrum carries the structure of interest; (ii) estimate the DOS, classically, by enumeration or, when the encoding permits it, by Monte Carlo (Sec.~\ref{sec:sampling}), or on quantum hardware, by trace sampling or direct spectral sampling; (iii) compress the density into the five moments of Eq.~\eqref{eq:features}; (iv) feed them to a standard, classically trained model. Every stage is problem-agnostic except the first: the encoding decides which invariances the features inherit and which structure they can see. The remainder of the paper instantiates this pipeline for signed graphs, using an Ising Hamiltonian whose spectral features inherit the invariance required by the quantities of interest.


\section{Frustration index}

Next, we recall some core concepts from the theory of signed networks. The frustration index, also known as the line index of balance, is a fundamental measure of structural balance in signed graphs~\cite{Harary_1953}. A signed graph is a graph $\GG = (\VV, \EE, \sigma)$ with $n = |\VV|$ vertices, where $\sigma : \EE \rightarrow \{+1,-1\}$ assigns a sign $\sigma_e$ to each edge $e$, indicating a positive (friendly) or negative (hostile) relationship. For an edge $e = \{u,v\}$, we also write $\sigma_{uv}$. The frustration index measures the ``distance'' of a graph from balance, i.e., from the condition that every cycle contains an even number of negative edges.

Formally, the frustration index $L(\GG)$ of the signed graph $\GG$ is defined as the minimum number of edges whose sign reversal results in a balanced graph,
\begin{equation}
    L(\GG) = \min_{X \subseteq \EE} \left( |X| : \GG^* = (\VV, \EE, \sigma^*_X) \text{ is balanced} \right),
\end{equation}
where $\sigma^*_X$ agrees with $\sigma$ on $\EE \setminus X$ and flips the sign of every edge in $X$. Deleting the edges of $X$ instead of flipping them yields the same minimum~\cite{Aref_2019}, so $L(\GG)$ equally counts the fewest edge deletions that restore balance.

Equivalently, the frustration index can be formulated in terms of vertex 2-colorings as
\begin{equation}
    L(\GG) = \min_{f: \VV \to \{0,1\}} \left| \left\{ \{u,v\} \in \EE : \sigma_{uv} \neq (-1)^{f(u) \oplus f(v)} \right\} \right|,
    \label{eq:coloring}
\end{equation}
where we call an edge \emph{frustrated} by the coloring $f$ when it violates the condition $\sigma_{uv} = (-1)^{f(u) \oplus f(v)}$, and \emph{satisfied} otherwise. This formulation highlights the connection to cut problems: for an all-negative signed graph, Eq.~\eqref{eq:coloring} reduces to MAXCUT, and the decision version of the problem is NP-complete~\cite{Barahona_1982, Aref_2019}. By contrast, deciding balance itself ($L = 0$) is solvable in linear time, illustrating the sharp difference between detecting perfect balance and determining the degree of frustration.
Equation~\eqref{eq:coloring} is also the formulation we use to label our datasets exactly via a standard integer linear program (Appendix~\ref{app:methods})~\cite{Aref_2020, Lubin_2023, Huangfu_2018}.


\section{Ising graph embedding}

We map the frustration index problem onto a spin-glass Ising Hamiltonian with quenched $\pm 1$ couplings supported on the edges of $\GG$~\cite{Anderson_1970, Altieri_2024},
\begin{equation}
    H = \sum_{\{i,j\} \in \EE} J_{ij} Z_i Z_j,
    \label{eq:H}
\end{equation}
where $J_{ij} = -\sigma_{ij}$ and $Z_i$ are Pauli-$Z$ operators. Thus, a positive (friendly) edge gives a ferromagnetic coupling $J_{ij}=-1$, while a negative (hostile) edge gives an antiferromagnetic coupling $J_{ij}=+1$. By construction, $J_{ij}=0$ for non-neighboring vertices.

The frustration index is exactly encoded in the ground-state energy of this Hamiltonian. For a computational-basis state $|s\rangle$, with $s \in \{\pm 1\}^{|\VV|}$ denoting the corresponding $Z$ eigenvalues and interpreted as a vertex 2-coloring, each satisfied edge contributes $-1$ and each frustrated edge $+1$. Hence,
\begin{equation}
    E(s) = 2 f_\GG(s) - |\EE|,
    \qquad
    E_0 = 2 L(\GG) - |\EE|,
    \label{eq:E0_frustration}
\end{equation}
where $f_\GG(s)$ counts the edges frustrated by the coloring $s$. Estimating $L(\GG)$ is therefore equivalent to determining the ground-state energy, motivating us to bypass optimization and instead extract information from spectral statistics.

\emph{Switching invariance.} A switching transformation flips the signs of all edges across a cut $(S, \VV\setminus S)$. It preserves the frustration index and partitions signed graphs into switching-equivalence classes, with $\GG$ balanced iff it is switching-equivalent to the all-positive graph~\cite{Zaslavsky_1982}. On the Hamiltonian side, switching by $S$ is implemented by conjugation with the unitary $U_S = \prod_{i \in S} X_i$, with $X_i$ the Pauli-$X$ operator on vertex $i$: since $X_i Z_i X_i = -Z_i$, the conjugation flips exactly the couplings with one endpoint in $S$,
\begin{equation}
    H_{\sigma'} = U_S\, H_{\sigma}\, U_S^\dagger ,
\end{equation}
where $H_\sigma$ denotes the Hamiltonian of Eq.~\eqref{eq:H} built on the sign function $\sigma$, and $\sigma'$ is the switched assignment. The spectrum of $H$, and hence the DOS and all of its moments, is therefore a \emph{switching-class invariant}, matching the invariance of $L(\GG)$. The DOS features thus discard the gauge redundancy associated with the choice of representative within a switching class, while retaining sign information that is absent from the unsigned graph spectrum. In particular, a balanced graph has the same DOS as its unsigned counterpart.

We stress that not every graph embedding preserves the relevant sign structure. A natural alternative encodes the signed graph in its \emph{line graph}, with edges as sites and sign products on adjacent pairs, in the spirit of spin-ice constructions. We prove in Appendix~\ref{app:lemma} that this route is spectrally sign-blind: the signed line graph is switching-equivalent to the line graph of the underlying unsigned graph, so its spectrum carries no information about frustration. By contrast, the direct encoding of Eq.~\eqref{eq:H} retains the switching-class information on which balance quantities depend, and is therefore the encoding we adopt.


\section{Moments of the Ising density of states}
\label{sec:isingdos}

We first consider the Hamiltonian of Eq.~\eqref{eq:H} without external fields, referring to this case as zero field. Its spectrum is supported on integers $E \in \{-|\EE|, \dots, |\EE|\}$ of fixed parity; a transverse field is introduced later in Sec.~\ref{sec:sampling}. The affine rescaling of Sec.~\ref{sec:dos} then becomes $x = (E + |\EE|)/(2|\EE|)$. Since $\tr H = 0$, the rescaled spectrum has mean $\bar{x} = 1/2$ and carries no graph-dependent information, and the two moment families are related by $\mu_k = M_k/(2|\EE|)^k$ for $k \geq 2$.

The first feature contains only the edge-count information. As shown below, $M_2 = |\EE|$, and hence $\sqrt{\mu_2} = 1/(2\sqrt{|\EE|})$ exactly. We retain this feature because the classifier requires the edge count, while noting that the moment feature set therefore already contains $|\EE|$, which is relevant to the feature-set comparisons below. The remaining four standardized moments $\gamma_3, \dots, \gamma_6$ capture the shape of the DOS and form the size-free part of the representation.

\subsection{Why moments encode frustration}
\label{sec:walks}

The energy moments of the DOS admit an exact combinatorial expansion in terms of the sign structure of the graph. Expanding the $k$-th power of Eq.~\eqref{eq:H} in Eq.~\eqref{eq:Mk}, with $D = 2^{|\VV|}$ and $H = -\sum_{\{i,j\}\in\EE} \sigma_{ij} Z_i Z_j$, the uniform average over computational basis states annihilates every term in which any vertex appears an odd number of times. The surviving terms are ordered $k$-tuples of edges whose multiset has even degree at every vertex and can therefore be decomposed into closed circuits,
\begin{equation}
    M_k = (-1)^k \!\!\sum_{\substack{(e_1,\dots,e_k) \,\in\, \EE^k \\ \text{even vertex degrees}}} \prod_{a=1}^{k} \sigma_{e_a}.
    \label{eq:moment_walks}
\end{equation}
The sign product $\prod_a \sigma_{e_a}$ of any closed circuit is switching-invariant, consistent with the spectral invariance derived above. The lowest moments are
\begin{align}
    M_1 &= 0, \qquad M_2 = |\EE|, \\
    M_3 &= -6 \left( t_+ - t_- \right) = 12\, t_- - 6\, t,
    \label{eq:third_moment}
\end{align}
where $t_\pm$ is the number of balanced/unbalanced triangles and $t = t_+ + t_-$. The third moment, equivalently the skewness of the DOS, counts triangle imbalance directly: unbalanced triangles shift the DOS towards positive skew [Fig.~\ref{fig:dos}], connecting our lowest odd feature to triangle-based balance measures~\cite{Kordonowy_2026}. Even moments are dominated by sign-\emph{independent} pairings of edges, such as the $O(|\EE|^2)$ contribution to $M_4$, with sign-sensitive corrections entering through short signed circuits (squares in $M_4$, pentagons in $M_5$, and so on). Odd moments, by contrast, are purely sign-sensitive. This explains the empirical feature hierarchy observed below: standardized odd moments (skewness, fifth moment) carry the frustration signal, while even moments encode the cycle structure against which it must be normalized. Higher moments probe progressively longer signed circuits, so that the collection $\{M_k\}$ interpolates between local triangle counts and the global cycle-space information that ultimately determines $L(\GG)$ through Eq.~\eqref{eq:E0_frustration}.

\subsection{Classical samplability at zero field}
\label{sec:sampling}

For fixed $k$, Eq.~\eqref{eq:moment_walks} can be evaluated classically in time polynomial in $|\EE|$, so the low-order moments used by our classifier are themselves efficiently computable. At zero field, however, a stronger result holds: the full DOS can be sampled directly.
\begin{figure}[t]
    \includegraphics[width=\columnwidth]{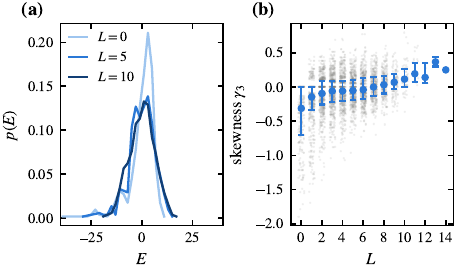}
    \caption{(a)~Exact DOS of three signed graphs with $n = 10$ vertices and $|\EE| = 39$ edges, with frustration index $L = 0, 5, 10$. Frustration drives the spectrum toward symmetry: the skewness rises from $\gamma_3 = -1.90$ to $-0.64$ to $-0.15$ and the kurtosis falls from $\gamma_4 = 8.42$ to $4.16$ to $2.74$, moving towards the Gaussian value of $3$. The lower edge sits at $E_0 = -39, -29, -19$, which is $2L - |\EE|$ in each case, as Eq.~\eqref{eq:E0_frustration} requires. (b)~DOS skewness $\gamma_3$ against the frustration index across the $n = 10$ ensemble ($2 \times 10^4$ graphs; gray: individual graphs, markers: median and interquartile range). The trend reflects Eq.~\eqref{eq:third_moment}, while the spread shows why a single moment is insufficient, motivating the combined features of Eq.~\eqref{eq:features}.}
    \label{fig:dos}
\end{figure}

\begin{proposition}[Monte-Carlo samplability]
\label{prop:sampling}
Let $\GG$ be a signed graph with $m = |\EE|$ edges and let $p(E)$ be the normalized DOS of the Hamiltonian in Eq.~\eqref{eq:H}. Then $p(E)$ is the distribution of the random variable $E(s) = -\sum_{\{u,v\} \in \EE} \sigma_{uv} s_u s_v$ under $s$ drawn uniformly from $\{\pm 1\}^{|\VV|}$. Hence, independent samples from the DOS can be generated classically at $O(m)$ cost per draw, for any graph size.
\end{proposition}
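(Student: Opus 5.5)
The plan is to use the fact that at zero field the Hamiltonian of Eq.~\eqref{eq:H} is diagonal in the computational basis. Every term $J_{ij} Z_i Z_j$ is a product of Pauli-$Z$ operators, so $H$ commutes with every $Z_k$. The $D = 2^{|\VV|}$ states $|s\rangle$, $s \in \{\pm 1\}^{|\VV|}$, therefore form an eigenbasis. The eigenvalue on $|s\rangle$ is obtained by replacing each $Z_i$ with $s_i$, which gives $H|s\rangle = E(s)|s\rangle$ with $E(s) = \sum_{\{u,v\}\in\EE} J_{uv} s_u s_v = -\sum_{\{u,v\}\in\EE} \sigma_{uv} s_u s_v$. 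This is the same computation that underlies Eq.~\eqref{eq:E0_frustration}.

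Next I will insert this eigenbasis into the definition of the DOS from Sec.~\ref{sec:dos}. Taking $|i\rangle = |s\rangle$ and $\omega_i = E(s)$ gives
\[
S(\omega) = 2^{-|\VV|} \sum_{s} \delta\bigl(\omega - E(s)\bigr).
\]
On the integer support, the weight of level $E$ is therefore $p(E) = 2^{-|\VV|}\,\bigl|\{s : E(s) = E\}\bigr|$. This is exactly $\Pr_{s}[E(s) = E]$ when $s$ is uniform on $\{\pm1\}^{|\VV|}$. Degeneracies cause no difficulty, since the DOS counts each eigenvector once with weight $1/D$, and that is the same as counting each configuration once.

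For the algorithmic claim, a single draw consists of two steps. First, sample $|\VV|$ independent fair signs, at cost $O(|\VV|)$. Second, make one pass over the edge list, accumulating $-\sigma_{uv}s_us_v$, at cost $O(m)$. Each draw is an exact, independent sample from $p(E)$, and no $2^{|\VV|}$-sized object is ever formed.

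The one point to handle carefully is the cost bound when there are isolated vertices. Then $|\VV|$ may exceed $m$, so the $O(|\VV|)$ sampling step does not obviously fit inside $O(m)$. I would resolve this by noting that isolated vertices do not affect $E(s)$. It then suffices to sample spins only for the non-isolated vertices, of which there are at most $2m$, which gives $O(m)$ per draw. Alternatively, the bound can be stated as $O(|\VV| + m)$. Apart from this bookkeeping, the argument is immediate because $H$ is diagonal.
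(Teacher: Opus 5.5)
Your proposal is correct and follows the same route as the paper: $H$ is diagonal in the computational basis with entries $E(s)$, and the DOS assigns every basis state the weight $2^{-|\VV|}$, which is the uniform distribution over colorings. Your remark on isolated vertices is useful extra bookkeeping that the paper's one-line proof leaves implicit in its $O(m)$ per-draw claim.
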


\begin{proof}
The Hamiltonian $H$ is diagonal in the computational basis with diagonal entries $E(s)$, and the DOS assigns every basis state the same weight $2^{-|\VV|}$, which is the uniform distribution over colorings.
\end{proof}

Proposition~\ref{prop:sampling} enables a quantum-inspired version of the zero-field pipeline. It scales feature generation to arbitrary graph size, since classical Monte Carlo reproduces the entire zero-field spectral density, rather than only its low moments, at a measured cost near $1\,$ns per edge per draw. It also gives the quantum pipelines of Sec.~\ref{sec:quantum} an exactly certifiable target, a possibility rarely available on classically hard problems. A similar expansion shows where classical computability ends. A transverse field, $H(h_x) = H + h_x \sum_{i \in \VV} X_i$, keeps every fixed-order moment a closed-form polynomial in $h_x$, since expanding $\tr H(h_x)^k$ in Pauli words annihilates every word carrying an odd number of $X$ factors on any site, with $M_3$ in particular exactly independent of $h_x$. However, the field breaks the diagonal structure on which Proposition~\ref{prop:sampling} rests, so that no classical sampler is known for the density itself or for spectral functionals beyond fixed-order moments, while the hardware routes of Sec.~\ref{sec:quantum} remain available.


\section{Learning the frustration index}
\label{sec:classical}

\emph{Datasets.} For each $n = 6, \dots, 12$, we generate a pool of $2 \times 10^4$ signed graphs from a density-swept Erd\H{o}s--R\'enyi $G(n,p)$ ensemble with $p \in [0.25, 0.75]$. We use two sign generators: independent random signs, and a near-balanced generator, constructed by randomly flipping signs in a balanced graph followed by a random switching, to populate the otherwise rare low- and high-$L$ classes. An integer linear program labels every graph exactly [Eq.~\eqref{eq:coloring}], in milliseconds at $n = 12$. The binding constraint on size is therefore not the labeling but the $2^{|\VV|}$ enumeration behind the exact DOS, which is the constraint that Proposition~\ref{prop:sampling} removes.

\emph{Deduplication.} Graphs sharing the exact energy histogram are indistinguishable to any method based on the DOS. These include isomorphic and switching-equivalent graphs, the latter produced deliberately by our near-balanced generator, as well as accidental cospectral pairs. Splitting such graphs across training and test sets would introduce leakage, so we group the pool by exact histogram and retain one representative per group. The reduction is substantial at small sizes but negligible at large ones: the $2\times10^4$ graphs reduce to $654$ distinct spectra at $n=6$, $11\,402$ at $n=8$, and $19\,862$ at $n=12$. All numbers below are computed on the deduplicated sets, as mean and standard deviation over five stratified 70/30 splits. We cap each class at $1500$ representatives and drop classes with fewer than $50$, leaving $4$ classes at $n = 6$ and $17$ classes ($L = 0, \dots, 16$) at $n = 12$. Further details are given in Appendix~\ref{app:methods}.

\emph{An empirical ceiling.} Grouping by exact histogram also measures how much the representation can possibly deliver. Across all $1.4\times10^5$ labeled graphs, \emph{every} group of graphs sharing a DOS also shares a single value of $L$: no counterexample occurs at any size. On this corpus the exact density of states therefore \emph{determines} the frustration index, and a Bayes-optimal predictor given the exact DOS would recover $L$ without error. The errors reported below therefore arise from compressing the density into five features and from the learner, rather than from the full DOS representation.

\emph{Models and baselines.} Using the features of Eq.~\eqref{eq:features}, we train a multinomial logistic regression that treats the frustration index as a \emph{class}, together with linear and random-forest regressors that treat it as an \emph{ordinal} quantity~\cite{Pedregosa_2011}. Regression performance is measured by the mean absolute error (MAE) of the rounded prediction. We compare against two groups of baselines. The first consists of \emph{counting} features available directly from the edge list: edge count, negative-edge count, triangle count, and signed-triangle sum, which underlie degree- and triangle-based balance indices. The second consists of \emph{spectral balance} measures from the signed-graph literature: $\beta_A = \sum_i (\lambda_i(\Sigma) - \lambda_i(G))^2$, $\gamma_A = \lambda_1(G) - \lambda_1(\Sigma)$, and the smallest signed-Laplacian eigenvalue~\cite{Hameed_2025}. Here, $\Sigma$ and $G$ are the adjacency matrices of the signed graph and its underlying unsigned graph, respectively, with eigenvalues ordered as $\lambda_1 \geq \dots \geq \lambda_n$.
\begin{figure}[t]
    \includegraphics[width=\columnwidth]{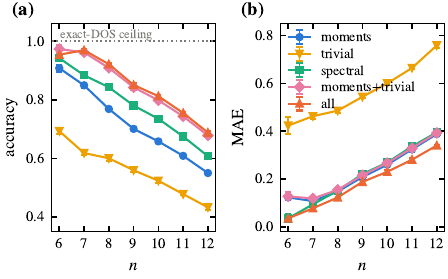}
    \caption{Estimation quality against graph size for the feature sets of Sec.~\ref{sec:classical}, on the deduplicated evaluation sets. Markers and error bars are the mean and standard deviation over five stratified 70/30 splits. (a) Multiclass accuracy of the multinomial logistic regression; the number of classes grows from $4$ at $n=6$ to $17$ at $n=12$. The dotted line is the exact-DOS ceiling: every group of graphs sharing a density of states also shares a single $L$, so a predictor with access to the full density would be exactly right, and the gap below it is the cost of compressing that density into five numbers. (b) Mean absolute error (MAE) of the rounded random-forest regression. Exact-class accuracy necessarily degrades as the label range grows, while the ordinal error stays below $0.4$ sign flips throughout.}
    \label{fig:classical}
\end{figure}

\emph{Results.} Estimation quality against graph size is shown in Fig.~\ref{fig:classical}. At $n = 6$, the five DOS moments alone drive a \emph{linear} classifier to $0.908 \pm 0.011$ accuracy, while adding the counting features reaches $0.973 \pm 0.012$. As $n$ grows, the number of classes more than quadruples, and the moments-only accuracy falls to $0.550 \pm 0.006$ at $n = 12$, compared with $0.433 \pm 0.008$ for the counting features on the same splits. The ordinal view remains sharper: the rounded regression MAE is $0.388 \pm 0.007$ sign flips at $n = 12$ with moments alone and $0.340 \pm 0.003$ with all features, while the counting features give $0.759 \pm 0.010$.

The DOS moments outperform the counting features at every size, by roughly $0.12$ in accuracy and a factor of two in MAE at $n = 12$, showing that the DOS captures signed circuits missed by edge and triangle counts. Classical spectral balance summaries~\cite{Hameed_2025} outperform the DOS moments at most sizes ($0.613 \pm 0.003$ vs $0.581 \pm 0.006$ for the random forest at $n = 12$), showing that frustration leaves a signature in the signed spectrum under different spectral summaries, without privileging our particular choice. The two feature families are complementary, and their union (``all'') gives the best performance throughout, reaching $0.688 \pm 0.007$ accuracy at $n = 12$. This remains below the exact-DOS ceiling of unit accuracy, indicating that a more informative summary of the same density is possible.

\emph{Scope: estimation, not optimization.} At the sizes studied here, exact answers are cheap. Twenty restarts of a textbook greedy 1-flip descent over vertex 2-colorings recover the \emph{exact} frustration index on at least $99.8\%$ of the graphs and always provide a certified upper bound. The search cost is comparable to the $2^{|\VV|}$ enumeration required for our features at $n = 6$ and more than an order of magnitude lower at $n = 12$ (Table~\ref{tab:heuristic}). The purpose of the spectral representation is therefore not to compete with direct optimization at small sizes. Its value lies in providing five switching-invariant, size-free features that any learning model can consume and that can be extracted on quantum hardware.
\begin{table}[b]
\begin{ruledtabular}
\begin{tabular}{lccccc}
$n$ & mean $|L_h - L|$ & exact & $t$(features) & $t$(search) & MAE \\
\hline
 6 & $0.000$ & $100.0\%$ & $4\,\mu$s   & $6\,\mu$s  & $0.123$ \\
 8 & $0.002$ & $99.8\%$  & $27\,\mu$s  & $11\,\mu$s & $0.147$ \\
10 & $0.000$ & $100.0\%$ & $55\,\mu$s  & $10\,\mu$s & $0.260$ \\
12 & $0.002$ & $99.8\%$  & $371\,\mu$s & $18\,\mu$s & $0.388$ \\
\end{tabular}
\end{ruledtabular}
\caption{Multi-restart local search against the learned estimator, on the deduplicated evaluation sets ($500$ graphs per size, median times, single thread). $L_h$ is the frustration index returned by the search; $t$(features) is the exact-DOS enumeration the estimator requires; $t$(search) is the entire heuristic. The last column repeats the estimator's moments-only rounded-regression mean absolute error (MAE) from Fig.~\ref{fig:classical}(b).}
\label{tab:heuristic}
\end{table}

The empirical ceiling stated above makes this precise. The exact density of states determines $L$ on every one of the $1.4 \times 10^5$ labeled graphs, and five moments of that density recover it far above the reach of edge and triangle counting. That is a statement about what the spectrum of Eq.~\eqref{eq:H} encodes, rather than how efficiently it can be computed. A quantity defined through an optimization over $2^{|\VV|}$ colorings leaves a strong signature in a handful of low-order spectral statistics, just as the third moment directly counts signed triangles. This makes the frustration index amenable to estimation from a measured density of states, with the quantum device performing Hamiltonian simulation rather than combinatorial optimization.

\emph{Size transfer.} The standardized moments of Eq.~\eqref{eq:features} are size-free (Sec.~\ref{sec:dos}), allowing models trained at small $n$ to be applied directly at larger sizes. Training on $n \leq 11$ and testing on $n = 12$, the moments-plus-counts model retains $0.566 \pm 0.007$ accuracy and the full feature set $0.621 \pm 0.005$, whereas the spectral-balance summaries, which are not size-normalized, fall to $0.215 \pm 0.001$. Within this range, the combined feature sets transfer best. Moments alone are less accurate than the counting features ($0.350 \pm 0.001$ vs $0.410 \pm 0.006$), while still outperforming them in ordinal error ($0.588 \pm 0.001$ vs $0.773 \pm 0.002$ MAE). The behavior changes sharply when the test graphs lie entirely outside the training-size range.
\begin{figure}[t]
    \includegraphics[width=\columnwidth]{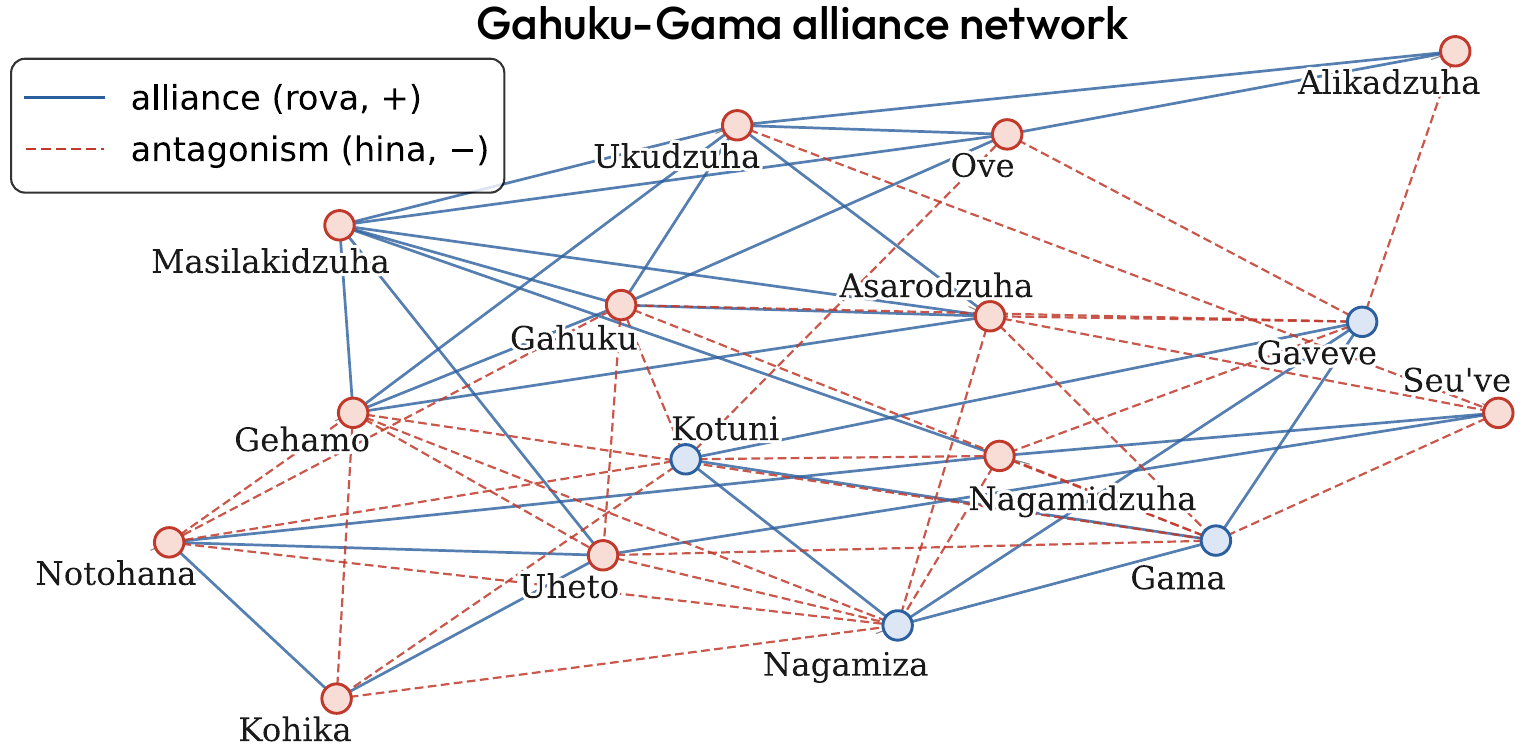}
    \caption{The Gahuku-Gama alliance network~\cite{Read_1954} visualized as a signed graph of $n = 16$ tribes with $29$ positive (\emph{rova}, solid) and $29$ negative (\emph{hina}, dashed) edges corresponding to alliances and antagonistic relations, respectively. Node color marks the two alliance blocs, and the frustration index is $L(\GG) = 7$.}
    \label{fig:network}
\end{figure}

\emph{Real networks.} As an out-of-distribution test we apply models trained \emph{only} on the deduplicated synthetic pools ($n \leq 12$) to two classic signed social networks~\cite{Kunegis_2013}. The first one is the Gahuku-Gama alliance network of the New Guinea highlands ($n = 16$, $|\EE| = 58$, $29$ negative)~\cite{Read_1954}, shown in Fig.~\ref{fig:network}. The second one is the Sampson monastery network ($n = 18$)~\cite{Sampson_1968}, which we discuss below. For Gahuku-Gama the exact frustration index, recomputed with our ILP, is $L = 7$, in agreement with Ref.~\cite{Aref_2019}. Reading only the five moments, and trained on graphs four vertices smaller, the multinomial logistic regression puts $0.574 \pm 0.016$ of its probability on the true $L = 7$, and the linear ordinal regressor predicts $7.09$.

The feature families behave differently under this extrapolation. Adding the raw counting features, despite improving estimation throughout the training-size range, causes the same classifier to assign $0.978$ probability to $L = 9$ and essentially none to the correct class. Edge, negative-edge, and triangle counts grow directly with graph size, so a model relying on them encounters feature values outside its training range and becomes confidently incorrect. By contrast, the normalized DOS features retain their interpretation across graph sizes and transfer successfully even though they are not the most accurate representation in distribution.

The Sampson network~\cite{Sampson_1968} probes the opposite boundary and delimits the training envelope. Symmetrized by the sign of the net pair rating it has $|\EE| = 110$ and $L = 29$, far outside the training label range $L \leq 16$, forcing the estimator to extrapolate. The classifier saturates at its highest available class, while the random-forest regressor returns a frustration density $L/|\EE|$ of $0.119$ against the true $0.264$, with no indication that the input lies outside the training distribution. The practical lesson concerns training pool design: the synthetic ensemble must cover the density and frustration regime of the target graphs. The size-normalized features make this easier to achieve across different graph sizes, since appropriate training data can be generated at any convenient $n$.

Sampson's ratings are directed, so the frustration index itself depends on how the directed ratings are converted into a signed undirected graph. We therefore examine several conventions rather than assigning a unique value. Symmetrizing by the sign of the net pair rating gives $L = 29$ over $110$ edges; keeping zero-net pairs as positive gives $L = 36$ over $126$ edges; calling an edge negative whenever either direction is negative gives $L = 35$; and restricting to the $48$ reciprocated pairs gives $L = 8$. The corresponding frustration densities are $0.264$, $0.286$, $0.278$, and $0.167$. The normalized quantity is therefore more comparable across conventions than $L$ itself, motivating frustration density as a natural target for real-world networks.


\section{Quantum pipelines}
\label{sec:quantum}

The pipeline of Sec.~\ref{sec:classical} used exact spectra. Proposition~\ref{prop:sampling} guarantees that at zero field these spectra are also cheap classically. This makes the frustration index a rare benchmark for quantum protocols: the hardware pipeline can be certified end to end against exact ground truth. Here we perform that certification and ask which quantum route returns the DOS most cheaply, and how much sampling it takes before a downstream classifier stops noticing the difference. We simulate both routes end to end and sample from the exact outcome statistics of each protocol, so that shot noise, the dominant error source, enters exactly.

\emph{NISQ route: trace sampling.} On noisy intermediate-scale quantum (NISQ) hardware, the normalized trace $s(t) = \tr U(t)/2^{|\VV|}$ is estimated by Hadamard tests on a maximally mixed input, with $\text{Re}\, s$ and $\text{Im}\, s$ each obtained from a finite number of binary shots~\cite{Scali_2024}. Because the zero-field spectrum is integer, sampling on the uniform time grid $t_k = 2\pi k / K$ with $K = 2|\EE| + 2$ points suffices for the inverse discrete Fourier transform to recover the DOS exactly in the infinite-shot limit; at finite shots the reconstructed DOS is clipped at zero and renormalized before computing Eq.~\eqref{eq:features}.

\emph{Early fault-tolerant route: DOS-QPE.} We introduce a circuit that samples the DOS directly, which we name DOS-QPE and draw in Fig.~\ref{fig:circuit}: textbook quantum phase estimation run not on an eigenstate but on the maximally mixed state $\rho = \mathds{1}/2^{|\VV|}$, prepared unitarily as half of a maximally entangled state (one Bell pair per system qubit, the purifying register left idle).

\begin{figure}[t]
\centering
\resizebox{\columnwidth}{!}{%
\begin{quantikz}[row sep=0.13cm, column sep=0.20cm]
\lstick[wires=4]{$\ket{0}^{\otimes M}$}
  & \gate{H} & \qw & \ctrl{4} & \qw & \qw & \qw
  & \gate[wires=4]{\text{QFT}^{\dagger}} & \meter{} & \rstick[wires=4]{$y$} \\
  & \gate{H} & \qw & \qw & \ctrl{3} & \qw & \qw & & \meter{} & \\
  & \vdots & & & & & & & \vdots & \\
  & \gate{H} & \qw & \qw & \qw & \qw & \ctrl{1} & & \meter{} & \\
\lstick{$\ket{0}^{\otimes|\VV|}$} & \gate{H^{\otimes|\VV|}} & \ctrl{1}
  & \gate{U^{2^{0}}} & \gate{U^{2^{1}}} & \cdots & \gate{U^{2^{M-1}}} & \qw & \qw & \\
\lstick{$\ket{0}^{\otimes|\VV|}$} & \qw & \targ{} & \qw & \qw & \qw & \qw & \qw & \qw &
\end{quantikz}}
\caption{The DOS-QPE circuit. The lower two registers hold $|\VV|$ qubits each. A layer of Hadamards followed by a transversal \textsc{cnot} prepares one Bell pair per system qubit, so that discarding the purifying register, which takes no further gates and is never measured, leaves the system in $\rho = \mathds{1}/2^{|\VV|}$. Phase estimation then proceeds as usual: the $M$-qubit register controls the powers $U^{2^k}$ of $U = e^{-iH\tau}$ with $\tau = 2\pi/(2|\EE|+2)$, and an inverse quantum Fourier transform is followed by measurement. Because the probe weighs every eigenstate equally, each outcome $y$ is one flat-weight draw from the spectral density, and the empirical moments of the sampled phases are the features of Eq.~\eqref{eq:features}. The register size $M$ sets the resolution of the sampled density and, through Table~\ref{tab:qpebias}, the highest moment order that survives the kernel.}
\label{fig:circuit}
\end{figure} An $M$-qubit phase register applies the controlled powers $U^{2^k}$ of $U = e^{-i H \tau}$, with $\tau = 2\pi/(2|\EE|+2)$, and is read out after an inverse quantum Fourier transform. This $\tau$ maps the integer spectrum onto the distinct eigenphases $\varphi_j = (E_j + |\EE| + 1)/(2|\EE| + 2)$, after the trivial relabeling of register outcomes that absorbs the sign and offset of the raw phase $-E_j \tau / (2\pi) \bmod 1$. The choice essentially maximizes $\tau$ without aliasing: the spectrum spans $2|\EE| + 1$ integer values, so any period of $2|\EE|$ or less would wrap the band edges onto each other; a period of $2|\EE| + 2$, matching the time grid of the trace route, leaves every eigenvalue with its own phase and one empty slot, and any finer $\tau$ would waste register resolution on phases the spectrum never visits. Because the probe weighs all eigenstates equally, the register outcomes $y \in \{0, \dots, 2^M - 1\}$ are distributed according to the spectrum convolved with the $M$-bit QPE kernel, $P(y) = \sum_j 2^{-|\VV|} \, F_M\!\left(y;\, \varphi_j\right)$, with $F_M(y;\varphi) = |\!\sin(\pi 2^M \Delta)/(2^M \sin \pi\Delta)|^2$ and $\Delta = \varphi - y/2^M$. Each shot is one flat-weight draw from the DOS, and the features of Eq.~\eqref{eq:features} are the empirical moments of the sampled phases. Ref.~\cite{Scali_2025_dosqpe} develops the extension beyond maximally mixed probes, where arbitrary purification unitaries select general spectral weights, together with a formalized post-processing and estimation pipeline.
\begin{figure}[t]
    \includegraphics[width=\columnwidth]{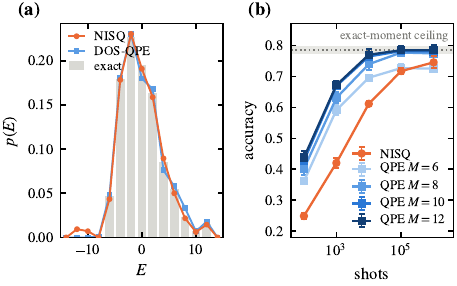}
    \caption{(a)~DOS reconstruction for an $n = 8$, $|\EE| = 14$, $L = 4$ graph at $10^3$ shots, aggregated on the physical energy grid: exact spectrum (gray), NISQ trace sampling (orange), DOS-QPE with $M = 8$ register bits (blue). The NISQ reconstruction develops spurious tail weight, from clipping the noisy inverse Fourier transform, which biases the high-order moments we use as features. (b)~Classifier accuracy at $n = 8$ against shot budget, for models trained and tested on quantum-sampled features. Markers and error bars are the mean and standard deviation over five stratified splits of the deduplicated set; the band is the exact-moment ceiling with its own spread. Two controls act independently: the shot budget sets how fast each curve rises, while the register size $M$ sets the plateau it rises to. A $6$-bit register saturates at $0.727 \pm 0.012$, short of the $0.786 \pm 0.012$ ceiling, whereas $M \geq 10$ reaches the ceiling and stays there. NISQ trace sampling has not caught up even at $10^6$ shots.}
    \label{fig:quantum}
\end{figure}

\emph{Results.} Both routes are run on the same deduplicated graphs the classical estimator was evaluated on, over a full grid of shot budgets and register sizes, with five stratified splits per configuration. Figure~\ref{fig:quantum}(b) shows the $n = 8$ comparison. The two parameters play distinct roles: the shot budget governs how quickly each curve rises by controlling the sampling variance, while the register size determines where it saturates by fixing the kernel bias that cannot be removed by increasing the number of shots.

\emph{(i) Shot efficiency.} DOS-QPE with an $M = 10$ register reaches $0.764 \pm 0.016$ at $10^4$ shots. NISQ trace sampling remains at $0.612 \pm 0.008$ at the same budget, reaches only $0.745 \pm 0.019$ at $10^6$ shots, and does not attain the QPE performance within the range studied. Matching the DOS-QPE result therefore requires more than two orders of magnitude more measurements for the near-term route.

\emph{(ii) The register sets the plateau.} At $n = 8$, the exact-moment ceiling is $0.786 \pm 0.012$, obtained by feeding the same classifier noiseless moments. This is the best performance attainable by any sampler of Eq.~\eqref{eq:features} and is distinct from the exact-DOS ceiling of Sec.~\ref{sec:classical}, which no five-number compression reaches. A $6$-bit register saturates at $0.727 \pm 0.012$ and stops improving beyond $10^5$ shots, leaving an irreducible deficit of about $0.06$. Increasing the register to $M = 8$ raises the plateau to $0.777 \pm 0.011$, while $M = 10$ and $M = 12$ reach $0.782 \pm 0.011$ and $0.785 \pm 0.012$, respectively, indistinguishable from the ceiling. The same pattern holds at $n = 12$, where $M = 10$ and $M = 12$ both converge to $0.559$, against a ceiling of $0.559 \pm 0.008$. This mirrors Table~\ref{tab:qpebias}: the plateau deficit is set by the moment bias imposed by the kernel, so improving beyond it requires more register qubits rather than more shots.

\emph{(iii) Train classically, deploy quantumly.} Training on exact moments and testing on quantum-sampled features represents the realistic deployment mode, since training data come from classical simulation. This transfer succeeds only when the kernel bias is small enough that the classical and quantum feature distributions remain sufficiently aligned. At $n = 8$ transfer accuracy tracks the self-trained value almost exactly for $M = 12$ ($0.789 \pm 0.014$ against $0.784 \pm 0.016$) and for $M = 10$ ($0.768 \pm 0.015$ against $0.781 \pm 0.012$), but falls away at $M = 8$ ($0.597 \pm 0.012$ against $0.774 \pm 0.009$) and collapses at $M = 6$ ($0.422 \pm 0.016$ against $0.725 \pm 0.011$). The NISQ route reaches $0.638 \pm 0.014$ only at $10^6$ shots, because clipping the noisy inverse transform distorts the feature distribution rather than merely broadening it. A classically trained model can therefore be deployed on quantum hardware using sampled features, but only above a register threshold, which for the five-moment feature set is $M \gtrsim 10$.

\emph{Register resolution sets the usable moment order.} The register outcomes follow the DOS convolved with the QPE kernel. As this kernel has $1/\Delta^2$ tails, it places weight far from each eigenvalue. High-order standardized moments are particularly sensitive to this leakage because they increasingly weight the tails. The resulting error is a bias rather than a variance: it does not vanish with more shots, but only with increasing register size. Table~\ref{tab:qpebias} compares both samplers against the exact moments on $300$ graphs at $n = 10$, each using $10^5$ shots.
\begin{table}[b]
\begin{ruledtabular}
\begin{tabular}{lcccc}
moment order & classical MC & $M=8$ & $M=10$ & $M=12$ \\
\hline
 3 & $0.033$ & $0.059$ & $0.034$ & $0.032$ \\
 4 & $0.003$ & $0.066$ & $0.011$ & $0.005$ \\
 6 & $0.007$ & $0.310$ & $0.052$ & $0.017$ \\
12 & $0.022$ & $7.3$   & $1.2$   & $0.24$  \\
20 & $0.034$ & $231$   & $39.8$  & $5.1$   \\
\end{tabular}
\end{ruledtabular}
\caption{Median relative error $|\hat\gamma_k - \gamma_k| / |\gamma_k|$ of standardized moments estimated from $10^5$ draws, against the exact values, for $300$ graphs at $n = 10$. Classical Monte Carlo draws from the true density and its error is pure variance. DOS-QPE draws from the kernel-convolved density, so its error contains a bias that no shot budget removes.}
\label{tab:qpebias}
\end{table}

Classical Monte Carlo, which draws from the true density, maintains small relative errors across all moment orders, with the remaining error due entirely to sampling variance. DOS-QPE matches this accuracy through the fourth moment at $M = 10$ and through the sixth at $M = 12$, but then degrades rapidly with moment order. By order $20$, the estimate is no longer useful for any register considered here. Each additional pair of register qubits reduces the sixth-moment bias by a factor of three to six, while increasing the number of controlled evolutions by a factor of four.

The pipeline of Eq.~\eqref{eq:features} therefore requires $M \gtrsim 10$, set by the highest retained moment rather than by any need to resolve individual spectral levels. This requirement remains mild because the feature set is low-order. Retaining higher moments would require progressively larger registers, which we do not pursue here. At the sizes studied, the additional accuracy from such moments is driven by the lower spectral edge, whose relative weight decreases exponentially with graph size, making this advantage increasingly difficult for any sampling-based approach to retain. Since cumulants add under convolution and the QPE kernel is known in closed form, its contribution could in principle be subtracted rather than resolved, further relaxing the register requirement. We leave this possibility to future work.


\section{Discussion}

Our results are average-case statements about estimation over natural ensembles and two real networks, complementary to the worst-case intractability of exact computation. Typical signed graphs carry substantial spectral information about their frustration index, and the exact DOS determines it across our entire dataset. The five-moment representation remains below this ceiling, while the complementary performance of the spectral-balance summaries of Ref.~\cite{Hameed_2025} suggests that the same density admits more informative compressions. These could include spectral functionals beyond moments, such as tail quantiles or low-lying gaps, which DOS-QPE can access with the same underlying circuit. On the learning side, the real-network tests also establish a requirement on training-pool design: the ensemble must cover the density and frustration regime of the target graphs. The size-normalized features make this easier to satisfy because training can be performed at a convenient graph size.

Interestingly, related approaches to physics-native feature extraction have recently proved useful in other learning settings. Directly measured optical spectra can serve as learning features, with the emission spectrum of a single nonlinear mode enabling classical recognition of squeezed quantum states~\cite{Verstraelen_2026}, while quantum Fourier features extracted from quantum-encoded flow fields enable the detection of vortical structures~\cite{Williams_2025}. For graph learning, photonic positional embeddings generated by light propagation on synthetic frequency lattices can augment graph neural networks~\cite{Wang_2025_PhotonicGCN}, while lattices of polariton condensates can physically encode relational and topological information for subsequent classical learning~\cite{Wang_2026_Polaritonic}. Together with the present results, these examples point to a broader role for physics-native and spectral features as representations that can enhance machine learning.

On the hardware side, the register requirement $M \gtrsim 10$ is set by kernel bias in the higher-order moments and could be relaxed by the kernel-subtraction post-processing outlined in Sec.~\ref{sec:quantum}. The zero-field benchmark is classically samplable (Proposition~\ref{prop:sampling}), which enables exact certification of the quantum extraction routes and provides a controlled boundary between classically accessible and more general dynamics. Beyond zero field, normalized-trace estimation for general dynamics is DQC1-complete~\cite{Knill_1998, Shor_2008}. Fixed-order moments remain classically computable even under a transverse field, but spectral functionals beyond them, beginning with the sampled density itself, have no known efficient classical sampler; DOS-QPE provides access to these while retaining the sample-complexity guarantees of Ref.~\cite{Scali_2025_dosqpe}.

Another motivation is the native realization of Ising models in neutral atom arrays with Rydberg-mediated interactions. These systems provide a promising hardware route for implementing the Ising dynamics required by both pipelines, with programmable Ising-type Hamiltonians available through analog evolution and gate-based control~\cite{Henriet_2020,Gentile_2025}.

On the application side, the same framework extends naturally beyond balance analysis and social networks. Signed interactions arise in spin glasses~\cite{Edwards_1975}, regulatory networks~\cite{Remy_2008}, and protein-interaction networks~\cite{Perfetto_2016}, and related signed-graph problems include correlation clustering~\cite{Bansal_2004}. Across these settings, DOS-based features can provide a compact, switching-invariant summary of global structure that can be extracted on quantum hardware through Hamiltonian simulation.

Finally, the lemma in Appendix~\ref{app:lemma} adds a broader design principle: spectral methods for signed structures are not encoding-agnostic. The line-graph route, natural from a spin ice perspective, provably discards all sign information, whereas the direct encoding of Eq.~\eqref{eq:H} preserves the switching-class information on which balance quantities depend. The effectiveness of the resulting representation therefore comes from matching the Hamiltonian symmetry to the invariances of the learning problem. DOS-based spectral features, already useful for topology and community structure~\cite{Scali_2024, Scali_2024_thermal, Umeano_2024_deteqt}, extend naturally to signed graphs and provide a route from quantum spectral estimation to classical learning without requiring the quantum device itself to solve the underlying combinatorial optimization problem.


\section{Conclusions}

We developed a quantum approach to spectral feature extraction from the density of states of a problem-dependent Hamiltonian and applied it to signed-graph learning. Signed graphs are embedded as Ising Hamiltonians whose energy moments count signed closed walks and yield switching-invariant spectral features. Using the frustration index as a benchmark, we showed that the exact DOS determines the target across $1.4\times10^5$ labeled graphs, while five moments recover it with a mean error of $0.4$ sign flips. We introduced DOS-QPE, which samples the spectral density with orders of magnitude fewer shots than Hadamard-test trace sampling and supports direct use of classically trained models. At zero field, classical DOS sampling enables exact certification. Beyond zero field, the underlying trace-estimation problem is DQC1-complete, providing access to spectral features for which no efficient classical sampler is known.


\begin{acknowledgments}
The authors would like to thank Shaheen Acheche and Louis-Paul Henry from Pasqal for useful discussions. O.\,K. acknowledges the support from UK EPSRC award under the Agreement No. EP/Z53318X/1 (QCi3 Hub).
\end{acknowledgments}


\appendix

\section{Line graphs cannot hear frustration}
\label{app:lemma}

Given a signed graph $\GG = (\VV, \EE, \sigma)$ with underlying unsigned graph $G$, define its \emph{signed line graph} $\Lambda(\GG)$ as the graph whose vertices are the edges $\EE$, with $e \sim f$ whenever $e$ and $f$ share an endpoint, and with edge signs given by the product $\sigma_{ef} = \sigma_e\,\sigma_f$. This is the natural ``spin-ice'' encoding in which frustrated plaquettes of $\GG$ would be represented by interactions among edge variables.

\begin{lemma}
\label{lem:blind}
$\Lambda(\GG)$ is switching-equivalent to $\Lambda(G)$, the all-positive line graph of the underlying graph: with $W = \mathrm{diag}\left(\sigma_e\right)_{e \in \EE}$,
\begin{equation}
    A_{\Lambda(\GG)} = W\, A_{\Lambda(G)}\, W .
    \label{eq:switching_identity}
\end{equation}
Consequently, the adjacency and signed-Laplacian spectra of $\Lambda(\GG)$, and the spectrum of any Ising Hamiltonian of the form of Eq.~\eqref{eq:H} built on $\Lambda(\GG)$, are independent of the sign function $\sigma$. No spectral functional of $\Lambda(\GG)$ can determine $L(\GG)$.
\end{lemma}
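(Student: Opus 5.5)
The plan is to verify the matrix identity of Eq.~\eqref{eq:switching_identity} entrywise, and then deduce each spectral consequence from the fact that $W$ is a real, diagonal, orthogonal involution ($W^2 = \mathds{1}$, $W = W^\top = W^{-1}$).

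For the entries, I would fix edges $e \neq f$. If $e$ and $f$ share no endpoint, then $(A_{\Lambda(G)})_{ef} = 0$ and $(A_{\Lambda(\GG)})_{ef} = 0$, so both sides vanish. If they share an endpoint, then $(W A_{\Lambda(G)} W)_{ef} = \sigma_e \cdot 1 \cdot \sigma_f$, which is exactly the sign $\sigma_{ef}$ assigned in the definition of $\Lambda(\GG)$. The diagonals are zero on both sides. This gives the identity, and shows that $W$ realizes a switching of $\Lambda(G)$ by the vertex set $S = \{e : \sigma_e = -1\}$.

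The consequences then follow in order.

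First, because $W^{-1} = W$, the identity is a similarity transformation, so $A_{\Lambda(\GG)}$ and $A_{\Lambda(G)}$ have the same spectrum.

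Second, for the signed Laplacian $D - A$, the degree matrix $D$ depends only on the unsigned line graph and is diagonal, so it commutes with $W$. Hence $D - A_{\Lambda(\GG)} = W(D - A_{\Lambda(G)})W$, and the two signed Laplacians are similar as well.

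Third, for the Ising Hamiltonian of Eq.~\eqref{eq:H} built on $\Lambda(\GG)$, I would reuse the switching argument already given in the main text. Take $U_S = \prod_{e \in S} X_e$ acting on the edge qubits. Conjugation by $U_S$ flips precisely those couplings $J_{ef} = -\sigma_e\sigma_f$ with exactly one endpoint in $S$, that is, those with $\sigma_e\sigma_f = -1$. The result is the all-ferromagnetic Hamiltonian on $\Lambda(G)$. Equivalently, one checks directly that $U_S$ maps each $Z_e Z_f$ to $\sigma_e\sigma_f Z_e Z_f$. Since this is a unitary conjugation, the spectrum and DOS of the edge-encoded Hamiltonian do not depend on $\sigma$.

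Finally, to show that no spectral functional of $\Lambda(\GG)$ can determine $L(\GG)$, I would give a witness. All three spectra above depend only on the underlying graph $G$. It therefore suffices to exhibit one graph $G$ carrying two sign functions with different frustration indices. The triangle works: all-positive signs give $L = 0$, and a single negative edge gives $L = 1$. Both signings have identical line-graph spectra, so any function of those spectra takes the same value on both signed graphs and cannot equal $L$.

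The only point needing care is the Laplacian step. Depending on the convention, the degree could be read as a signed row sum, which would not commute with $W$ in the same way. I would fix the signed-graph convention in which the degree is the unsigned degree, as in Ref.~\cite{Hameed_2025}, and with that choice the whole argument is routine.
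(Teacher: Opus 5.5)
Your proposal is correct and follows the paper's proof step for step. Both arguments use the entrywise identity, then similarity by the orthogonal involution $W$ for the adjacency and the signed Laplacian, then conjugation by $\prod_{e:\sigma_e=-1}X_e$ for the Ising Hamiltonian, and finally a triangle witness; the only difference is that the paper uses the all-negative triangle where you use a single negative edge, and both have $L=1$. One small correction to your closing caveat: every diagonal matrix commutes with $W$, so the real problem with a signed-row-sum degree convention is that this degree matrix itself depends on $\sigma$, which is why the unsigned-degree convention you adopt is the right one.
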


\begin{proof}
Entrywise, $[A_{\Lambda(\GG)}]_{ef} = \sigma_e \sigma_f [A_{\Lambda(G)}]_{ef}$ by the definition of the sign product, which is Eq.~\eqref{eq:switching_identity}. Since $W$ is diagonal with entries $\pm 1$, it is orthogonal, so the adjacency spectra coincide. The signed Laplacian (the degree matrix minus the adjacency) conjugates the same way because the degree matrix is diagonal. For the Ising Hamiltonian, conjugation by $U = \prod_{e :\, \sigma_e = -1} X_e$ maps $H_{\Lambda(\GG)}$ to $H_{\Lambda(G)}$, cf.\ the switching argument of the main text. Finally, the frustration index is \emph{not} constant on sign configurations of a fixed underlying graph (the all-positive and all-negative triangles have $L = 0$ and $L = 1$ but identical $\Lambda$ spectra), so no function of the $\Lambda(\GG)$ spectrum can compute it.
\end{proof}

The lemma explains a dead end that, to our knowledge, has not been recorded: any pipeline that extracts spectral statistics (moments, DOS, gaps, spectral measures of balance) from this line-graph encoding returns \emph{exactly} the same answer for every signing of a given graph, and cannot even separate balance from maximal frustration. The information loss occurs at the embedding stage, before any classification method is applied. The statement concerns the product-sign line graph defined above; sign conventions based on bidirected incidences~\cite{Zaslavsky_1982} define different objects and are not covered by (nor needed for) our argument. We verified Eq.~\eqref{eq:switching_identity} as an exact integer identity, along with the coincidence of adjacency and Ising spectra and the variation of $L$ across signings, on $320$ random signed graphs with $n = 5$--$7$ (code accompanies the paper).


\section{Datasets, labeling, and models}
\label{app:methods}

\emph{Exact labeling.} We label every graph by the standard ILP formulation of Eq.~\eqref{eq:coloring}~\cite{Aref_2020}: binary color variables $x_v$, frustration indicators $y_e$, constraints $y_e \geq \pm(x_u - x_v)$ for positive edges and $y_e \geq x_u + x_v - 1$, $y_e \geq 1 - x_u - x_v$ for negative edges, objective $\min \sum_e y_e$, with $x_1 = 0$ breaking the global flip symmetry; solved with HiGHS via JuMP~\cite{Huangfu_2018, Lubin_2023}. Against brute-force enumeration on $80$ graphs ($n = 4$ to $7$) the ILP agrees exactly and runs about $360$ times faster already at $n = 7$. We checked every returned flip set by applying it and confirming the resulting graph has no unbalanced cycle, and every instance in the corpus closed to proven optimality, so no label in this paper rests on a truncated search.

\emph{Pools.} Per size $n \in \{6, \dots, 12\}$: $2\times 10^4$ graphs made unique at generation time by hashing the edge list and sign vector, generators as in Sec.~\ref{sec:classical}, seeds and generator parameters stored with the data. That hash removes literal repeats only; the far stronger deduplication by exact energy histogram described in Sec.~\ref{sec:classical}, which is what the evaluation uses, is applied afterwards and reduces these pools to between $654$ and $19\,862$ distinct spectra. The zero-field spectrum is computed directly as the $2^n$ coloring energies of Eq.~\eqref{eq:E0_frustration}, without constructing any operator.

\emph{Models.} Multinomial logistic regression (no regularization, standardized inputs), random-forest classifier and regressor ($300$ trees), and linear regression~\cite{Pedregosa_2011}. Every reported number is the mean and standard deviation over five stratified 70/30 splits of the deduplicated set, with a per-class cap of $1500$ and a minimum class size of $50$ applied after deduplication. Reported MAEs score rounded regression outputs against the integer label.

\emph{Quantum simulations.} Both pipelines are sampled from exact outcome statistics: binomial shot noise on $\mathrm{Re}\,s(t_k)$, $\mathrm{Im}\,s(t_k)$ for the NISQ route, and multinomial sampling of the kernel-convolved register distribution for DOS-QPE. Graphs are the deduplicated representatives of Sec.~\ref{sec:classical}, stratified by $L$ with a cap of $400$ per class, giving $622$, $2883$ and $6283$ graphs at $n = 6$, $8$ and $12$. The grid is rectangular: every shot budget in $\{10^2, \dots, 10^6\}$ for NISQ, and every pair of $M \in \{6, 8, 10, 12\}$ with the same budgets for DOS-QPE. The kernel-convolved distribution is built once per graph and register size, then sampled at each budget. Simulator moments converge to the exact features in the large-shot limit (deviations ${\sim}3\times 10^{-2}$ in $\gamma_6$ at $10^7$ shots).


\bibliographystyle{unsrtnat} 
\bibliography{main}

\end{document}